\newtheorem{theorem}{Theorem}
\newtheorem{example}[theorem]{Example}
\newtheorem{definition}[theorem]{Definition}
\newtheorem{lemma}[theorem]{Lemma}
\newtheorem{problem}[theorem]{Problem}
\newtheorem{remark}[theorem]{Remark}
\begin{document}

\author{Edgar Delgado-Eckert\thanks{%
Centre for Mathematical Sciences, Technische Universit\"{a}t M\"{u}nchen,
Boltzmannstr.3, 85747 Garching, Germany. Email: edgar.delgado-eckert@mytum.de%
} \thanks{%
Pathology Department, Tufts University, 150 Harrison Av., Boston, MA 02111,
USA (\emph{correspondence address}).} \thanks{%
The author acknowledges support by a Public Health Service grant (RO1
AI062989) to David Thorley-Lawson at Tufts University, Boston, MA.}}
\title{Reverse engineering time discrete finite dynamical systems: A
feasible undertaking?}
\maketitle

\begin{abstract}
With the advent of high-throughput profiling methods, interest in reverse
engineering the structure and dynamics of biochemical networks is high.
Recently an algorithm for reverse engineering of biochemical networks was
developed by Laubenbacher and Stigler. It is a top-down approach using time
discrete dynamical systems. One of its key steps includes the choice of a
term order. The aim of this paper is to identify minimal requirements on
data sets to be used with this algorithm and to characterize optimal data
sets. We found minimal requirements on a data set based on how many terms
the functions to be reverse engineered display. Furthermore, we identified
optimal data sets, which we characterized using a geometric property called
"general position". Moreover, we developed a constructive method to generate
optimal data sets, provided a codimensional condition is fulfilled. In
addition, we present a generalization of their algorithm that does not
depend on the choice of a term order. For this method we derived a formula
for the probability of finding the correct model, provided the data set used
is optimal. We analyzed the asymptotic behavior of the probability formula
for a growing number of variables $n$ (i.e. interacting chemicals).
Unfortunately, this formula converges to zero as fast as $r^{q^{n}},$ where $%
q\in 
%TCIMACRO{\U{2115} }%
%BeginExpansion
\mathbb{N}
%EndExpansion
$ and $0<r<1$. Therefore, even if an optimal data set is used and the
restrictions in using term orders are overcome, the reverse engineering
problem remains unfeasible, unless prodigious amounts of data are available.
Such large data sets are experimentally impossible to generate with today's
technologies.
\end{abstract}

%\begin{titlepage}

\begin{keywords}
   Reverse engineering, data requirements, biochemical networks, time discrete dynamical systems, orthogonality
 \end{keywords}

%\end{titlepage}

\section{\protect\bigskip Introduction}

Since the development of multiple and simultaneous measurement techniques
such as microarray technologies, reverse engineering of biochemical and, in
particular, gene regulatory networks has become a more important problem in
systems biology . One well-known reverse engineering approach are the
top-down methods, which try to infer network properties based on the
observed global input-output-response. The observed input-output-response is
usually only partially described by available experimental data.

Depending on the type of mathematical model used to describe a biochemical
process, a variety of top-down reverse engineering algorithms have been
proposed \cite{DeJong}, \cite{Haeseleer}, \cite{Gardner}. Each modeling
paradigm presents different requirements relative to quality and amount of
the experimental data needed. Moreover, for each type of model, a suitable
mathematical framework has to be developed in order to study the performance
and limitations of reverse engineering methods. For any given modeling
paradigm and reverse engineering method it is important to answer the
following questions:

\begin{enumerate}
\item What are the minimal requirements on data sets?

\item Can data sets be characterized in such a way that "optimal" data sets
can be identified? (Optimality meaning that the algorithm performs better
using such a data set compared to its performance using other data sets.)
\end{enumerate}

The second question is related to the \emph{design of experiments} and
optimality is characterized in terms of \emph{quantity and quality} of the
data sets.

\cite{MR2086931} developed a top-down reverse engineering algorithm for the
modeling paradigm of time discrete finite dynamical systems. Herein, we will
refer to it\ as the LS-algorithm. They apply their method to biochemical
networks by modeling the network as a time discrete finite dynamical system,
obtained by discretizing the concentration levels of the interacting
chemicals to elements of a finite field. One of the key steps of the
LS-algorithm includes the choice of a term order. The modeling paradigm of
time discrete finite dynamical systems generalizes the Boolean approach \cite%
{Kauffman} (where the field only contains the elements $0$ and $1$).
Moreover, it is a special case of the paradigm described in \cite{Thomas}.

Some aspects of the performance of the LS-algorithm were studied by \cite%
{MR2265002} in a probabilistic framework.

In this paper we investigate the two questions stated above in the
particular case of the LS-algorithm. For this purpose, we developed a
mathematical framework\footnote{%
This framework is based on a general linear algebraic result stated in \cite%
{MR???????}.} that allows us to study the LS-algorithm in depth. Having
expressed the steps of the LS-algorithm in our framework, we were able to
provide concrete answers to both questions: First, we found minimal
requirements on a data set based on how many terms the functions to be
reverse engineered display. Second, we identified optimal data sets, which
we characterize using a geometric property called "general position".
Moreover, we developed a constructive method to generate optimal data sets,
provided a codimensional condition is fulfilled.

In addition, we present a generalization of the LS-algorithm that does not
depend on the choice of a term order. We call this generalization the \emph{%
term-order-free reverse engineering method. }For this method we derive a
formula for the probability of finding the correct model\footnote{%
We will give a precise definition of "correct model".}, provided the data
set used satisfies an optimality criterion. Furthermore, we analyze the
asymptotic behavior of the probability formula for a growing number of
variables $n$ (i.e. interacting chemicals). Unfortunately, this formula
converges to zero as fast as $r^{q^{n}},$ where $q\in 
%TCIMACRO{\U{2115} }%
%BeginExpansion
\mathbb{N}
%EndExpansion
$ and $0<r<1$. Consequently, we conclude that even if an optimal data set is
used and the restrictions imposed by the use of term orders are overcome,
the reverse engineering problem remains unfeasible, unless experimentally
impracticable amounts of data are available. This result discouraged us from
including in this paper any computational and algorithmic aspects of the
term-order-free reverse engineering method.

In contrast to \cite{MR2265002}, we focus here on providing possible
criteria for the design of specific experiments instead of assuming that the
data sets are generated randomly. Moreover, we do not necessarily assume
that information about the actual number of interactions in the biochemical
network is available.

The organization of this article is the following:

Section 2 is devoted to the mathematical background: We briefly describe the
LS-algorithm and provide a mathematical framework to study it. Moreover, we
introduce the term-order-free reverse engineering method. We finish the
section with a clear formulation of the questions studied in this paper.
Section 3 presents rigorous results and some of their consequences. In
Section 4 we summarize our main results, discuss their consequences and
provide further conclusions.

To fully understand the technical details of our analysis, very basic
knowledge in linear algebra and algebra of multivariate polynomials is
required. We refer the interested reader to \cite{Golan} and \cite{MR1417938}%
.

\section{Mathematical background}

\subsection{A short description of the LS-algorithm}

In the modeling paradigm described by \cite{MR2086931}, a biological or
biochemical system described by $n$ varying quantities is studied by taking $%
m$ consecutive measurements of each of the interacting quantities. This
yields one time series%
\begin{equation*}
\mathbf{\vec{s}}_{1}=(s_{11},s_{12},...,s_{1n}),...,\mathbf{\vec{s}}%
_{m}=(s_{m1},s_{m2},...,s_{mn})
\end{equation*}%
Such series of consecutive measurements are repeated $t$ times starting from
different initial conditions, where the length $m_{k}$ of the series may
vary. At the end of this experimental procedure, several time series are
obtained:%
\begin{equation*}
\begin{array}{c}
\overrightarrow{\mathbf{s1}}_{1},...,\overrightarrow{\mathbf{s1}}_{m_{1}} \\ 
\vdots \\ 
\overrightarrow{\mathbf{sk}}_{1},...,\overrightarrow{\mathbf{sk}}_{m_{k}} \\ 
\vdots \\ 
\overrightarrow{\mathbf{st}}_{1},...,\overrightarrow{\mathbf{st}}_{m_{t}}%
\end{array}%
\end{equation*}%
Each point in a time series is a vector in $%
%TCIMACRO{\U{211d} }%
%BeginExpansion
\mathbb{R}
%EndExpansion
^{n}.$ Time series are then discretized using a discretization algorithm
that can be expressed as a map%
\begin{equation}
D:%
%TCIMACRO{\U{211d} }%
%BeginExpansion
\mathbb{R}
%EndExpansion
^{n}\rightarrow S^{n}  \label{DiscretMapping}
\end{equation}%
where the set $S$ is a finite field of cardinality $p:=\left\vert
S\right\vert $ (the cardinality of the field used is determined during the
discretization process). The discretized time series can be written as%
\begin{equation*}
\overrightarrow{\mathbf{dk}}_{1}:=D(\overrightarrow{\mathbf{sk}}_{1}),...,%
\overrightarrow{\mathbf{dk}}_{m_{k}}:=D(\overrightarrow{\mathbf{sk}}%
_{m_{k}}),\text{ }k=1,...,t
\end{equation*}%
One fundamental assumption made in their paper is that the evolution in time
of the discretized vectors obeys a simple rule, namely, that there is a
function%
\begin{equation*}
F:S^{n}\rightarrow S^{n}
\end{equation*}%
such that%
\begin{equation}
\overrightarrow{\mathbf{dk}}_{i+1}=F(\overrightarrow{\mathbf{dk}}_{i})\text{
for }i=1,...,m_{k}-1,\text{ }k=1,...,t  \label{Interp.Cond.}
\end{equation}%
\cite{MR2086931} call $F$ the transition function of the system. One key
ingredient in the LS-algorithm is the fact that the set $S$ is endowed with
the algebraic structure of a finite field. Under this assumption, the rule (%
\ref{Interp.Cond.}) reduces to a polynomial interpolation problem in each
component, i.e. for each $j\in \{1,...,n\}$%
\begin{equation}
\mathbf{dk}_{(i+1)j}=F_{j}(\overrightarrow{\mathbf{dk}}_{i})\text{ for }%
k=1,...,t,\text{ }i=1,...,m_{k}-1  \label{Comp.Interp.Cond}
\end{equation}%
The information provided by the equations (\ref{Comp.Interp.Cond})\ usually
underdetermines the function\linebreak $F_{j}:S^{n}\rightarrow S,$ unless
for all possible vectors $\vec{x}\in $ $S^{n},$ the values $F_{j}(\vec{x})$
are established by (\ref{Comp.Interp.Cond}). Indeed, any non-zero polynomial
function that vanishes on all the data inputs%
\begin{equation*}
X:=\{\overrightarrow{\mathbf{dk}}_{i}\text{ }|\text{ }k=1,...,t,\text{ }%
i=1,...,m_{k}-1\}
\end{equation*}%
could be added to a function satisfying the conditions (\ref%
{Comp.Interp.Cond}) and yield a different function that also satisfies (\ref%
{Comp.Interp.Cond}). Among all those possible solutions, the LS-algorithm
chooses the \textit{most parsimonious} interpolating polynomial function $%
F_{j}:S^{n}\rightarrow S$ according to some chosen term order. To generate
the most parsimonious function the algorithm first takes as input the
discretized time series and generates functions $f_{j},$ $j=1,...,n$ that
satisfy (\ref{Comp.Interp.Cond}) for each $j\in \{1,...,n\}$
correspondingly. Secondly, it takes a monomial order $<_{j}$ as input and
generates the normal form of $f_{j}$ with respect to the vanishing ideal $%
I(X)$ and the given order $<.$ For every $j\in \{1,...,n\},$ this normal
form is the output $F_{j}$ of the algorithm.\newline
We also refer to 2.1 in \cite{MR2265002} for another rigorous description of
the LS-algorithm.

\subsection{A mathematical framework to study the reverse engineering problem%
}

The mathematical framework presented here is based on a general result
stated in \cite{MR???????}. This framework will allow us to study the
LS-algorithm as well as a generalized algorithm of it that is independent on
the choice of term orders.\newline
We start with the original problem: Given a time-discrete dynamical system
over a finite field $S$ in $n$ variables%
\begin{equation*}
F:S^{n}\rightarrow S^{n}
\end{equation*}%
and a data set $X\subseteq S^{n}$ generated by iterating the function $F$
starting at one or more initial values, what are the chances of
reconstructing the function $F$ if the LS-algorithm or a similar algorithm
is applied using $X$ as input time series?\footnote{%
From an experimental point of view the following question arises: What is
the function $F$ in an experimental setting? Contrary to the situation when
models with an infinite number of possible states are reverse engineered
(see 1.2 in \cite{Ljung:1999:SIT}), there is a finite number of experiments
that could be, at least theoretically, performed to \textit{completely}
characterize the system studied. In this sense, even in an experimental
setting, there is an underlying function $F.$ The components of this
function is what \cite{MR2265002} called $h_{true}$.} Since the algorithms
studied here generate an output model $G:S^{n}\rightarrow S^{n}$ by
calculating every single coordinate function $G_{i}:S^{n}\rightarrow S$
separately, we will focus on the reconstruction of a single coordinate
function $F_{i}$ which we will simply call $f.$ We will use the notation $%
\mathbf{F}_{q}$ for a finite field of cardinality $q\in 
%TCIMACRO{\U{2115} }%
%BeginExpansion
\mathbb{N}
%EndExpansion
.$ In what follows, we briefly review the main definitions and results
stated and proved in \cite{MR???????}:\newline
We denote the $q^{n}$-dimensional vector space of functions $g:\mathbf{F}%
_{q}^{n}\rightarrow \mathbf{F}_{q}$ with $F_{n}(\mathbf{F}_{q}).$ A basis
for $F_{n}(\mathbf{F}_{q})$ is given by all the monomial functions $%
\overrightarrow{x}^{\alpha }:=x_{1}^{\alpha _{1}}\cdot ...\cdot
x_{1}^{\alpha _{n}}$ where the exponents $\alpha _{i}$ are non-negative
integers satisfying $\alpha _{i}<q.$ The set of all those monomial functions
is denoted with $(g_{nq\alpha })_{\alpha \in M_{q}^{n}},$ where $%
M_{q}^{n}:=\left\{ \alpha \in \left( 
%TCIMACRO{\U{2115} }%
%BeginExpansion
\mathbb{N}
%EndExpansion
_{0}\right) ^{n}\mid \alpha _{j}<q\text{ }\forall \text{ }j\in
\{1,...,n\}\right\} .$ We call those monomial functions \emph{fundamental
monomial functions}.

\begin{theorem}[and Definition]
Let $\mathbf{F}_{q}$ be a finite field and $n,m\in 
%TCIMACRO{\U{2115} }%
%BeginExpansion
\mathbb{N}
%EndExpansion
$ natural numbers with $m\leq q^{n}$. Further let%
\begin{equation*}
\vec{X}:=(\vec{x}_{1},...,\vec{x}_{m})\in (\mathbf{F}_{q}^{n})^{m}
\end{equation*}%
be a tuple of $m$ \textbf{different} $n$-tuples with entries in the field $%
\mathbf{F}_{q}.$ Then the mapping%
\begin{eqnarray*}
\Phi _{\vec{X}} &:&F_{n}(\mathbf{F}_{q})\rightarrow \mathbf{F}_{q}^{m} \\
f &\mapsto &\Phi _{\vec{X}}(f):=(f(\vec{x}_{1}),...,f(\vec{x}_{m}))^{t}
\end{eqnarray*}%
is a surjective linear operator. $\Phi _{\vec{X}}$ is called the \emph{%
evaluation epimorphism} \emph{of the tuple }$\vec{X}.$
\end{theorem}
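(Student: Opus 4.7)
The plan is to verify linearity as a triviality and then attack surjectivity by constructing, for each data point $\vec{x}_i$, an explicit indicator function in $F_n(\mathbf{F}_q)$ that hits $e_i$ under $\Phi_{\vec{X}}$. Linearity is immediate from the pointwise definition: for $f,g \in F_n(\mathbf{F}_q)$ and $\lambda \in \mathbf{F}_q$, the identities $(f+g)(\vec{x}_i) = f(\vec{x}_i)+g(\vec{x}_i)$ and $(\lambda f)(\vec{x}_i) = \lambda f(\vec{x}_i)$ hold coordinate-wise in the target space $\mathbf{F}_q^m$. So the substance of the theorem is surjectivity.

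For surjectivity, I would first reduce to the task of hitting the standard basis $e_1,\ldots,e_m$ of $\mathbf{F}_q^m$: if I can exhibit $\delta_i \in F_n(\mathbf{F}_q)$ with $\delta_i(\vec{x}_j) = \mathbf{1}_{\{i=j\}}$ for all $i,j \in \{1,\ldots,m\}$, then for any target $(y_1,\ldots,y_m) \in \mathbf{F}_q^m$ the function $f := \sum_{i=1}^m y_i\,\delta_i$ lies in $F_n(\mathbf{F}_q)$ (a vector space) and satisfies $\Phi_{\vec{X}}(f) = (y_1,\ldots,y_m)^t$ by construction. So everything boils down to producing $\delta_i$.

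The candidate I would use is an indicator for the single point $\vec{x}_i = (x_{i1},\ldots,x_{in})$, namely
\begin{equation*}
\delta_i(\vec{x}) \;:=\; \prod_{k=1}^{n}\Bigl(1 - (x_k - x_{ik})^{q-1}\Bigr).
\end{equation*}
The key arithmetic fact is Fermat's little theorem in $\mathbf{F}_q$: every $c \in \mathbf{F}_q$ satisfies $c^{q-1} = 1$ if $c \neq 0$ and $c^{q-1} = 0$ if $c = 0$. Hence each factor equals $1$ when $x_k = x_{ik}$ and $0$ otherwise, so $\delta_i(\vec{x}) = 1$ precisely when $\vec{x} = \vec{x}_i$ and $\delta_i(\vec{x}) = 0$ on every other point of $\mathbf{F}_q^n$. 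Since the $\vec{x}_j$ are pairwise distinct by hypothesis, this gives $\delta_i(\vec{x}_j) = \mathbf{1}_{\{i=j\}}$. Expanding the product shows $\delta_i$ is a polynomial in $x_1,\ldots,x_n$, hence a fundamental polynomial function in $F_n(\mathbf{F}_q)$ after reducing exponents modulo $q$ if needed.

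I do not expect a genuine obstacle here; the main thing to be careful about is invoking the distinctness of the $\vec{x}_j$ at exactly the right moment (without it the $\delta_i$'s do not separate the points), and ensuring the constructed $\delta_i$ actually represents an element of $F_n(\mathbf{F}_q)$ rather than a merely formal polynomial — which follows because $F_n(\mathbf{F}_q)$ denotes the space of all functions $\mathbf{F}_q^n \to \mathbf{F}_q$, so any polynomial expression automatically defines such a function. The hypothesis $m \le q^n$ is consistent (but not actually needed beyond allowing the $\vec{x}_j$ to be distinct), since $\dim F_n(\mathbf{F}_q) = q^n$ upper-bounds the rank of $\Phi_{\vec{X}}$.
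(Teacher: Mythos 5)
Your proof is correct. Note that the paper itself states this result without proof, deferring to the cited reference, so there is no in-paper argument to compare against; your indicator-function construction via $\delta_i(\vec{x})=\prod_{k=1}^{n}\bigl(1-(x_k-x_{ik})^{q-1}\bigr)$ is the standard one and is complete, with the distinctness of the $\vec{x}_j$ invoked exactly where it is needed. One small simplification worth noting: since $F_n(\mathbf{F}_q)$ is defined as the space of \emph{all} functions $\mathbf{F}_q^n\rightarrow\mathbf{F}_q$, the point indicators exist trivially as set-theoretic functions and surjectivity follows at once; your explicit polynomial representation is not required for the theorem, though it does establish the stronger fact that every such function is polynomial.
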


For a given set $X\subseteq $ $\mathbf{F}_{q}^{n}$ of data points, the
interpolation problem of finding a function $g\in F_{n}(\mathbf{F}_{q})$
with the property%
\begin{equation*}
g(\vec{x}_{i})=b_{i}\text{ }\forall \text{ }i\in \{1,...,m\},\text{ }%
x_{i}\in X
\end{equation*}%
can be expressed using the evaluation epimorphism as: Find a function $g\in
F_{n}(\mathbf{F}_{q})$ with the property%
\begin{equation}
\Phi _{\vec{X}}(g)=\vec{b}  \label{Epimor.Interp.Cond.}
\end{equation}%
Since a basis of $F_{n}(\mathbf{F}_{q})$ is given by the fundamental
monomial functions $(g_{nq\alpha })_{\alpha \in M_{q}^{n}},$ the matrix%
\begin{equation*}
A:=(\Phi _{\vec{X}}(g_{nq\alpha }))_{\alpha \in M_{q}^{n}}\in M(m\times
q^{n};\mathbf{F}_{q})
\end{equation*}%
representing the evaluation epimorphism $\Phi _{\vec{X}}$ of the tuple $\vec{%
X}$ with respect to the basis $(g_{nq\alpha })_{\alpha \in M_{q}^{n}}$ of $%
F_{n}(\mathbf{F}_{q})$ and the canonical basis of $\mathbf{F}_{q}^{m}$ has
always the full rank $m=\min (m,q^{n}).$ That also means, that the dimension
of the $\ker (\Phi _{\vec{X}})$ is%
\begin{equation}
\dim (\ker (\Phi _{\vec{X}}))=\dim (F_{n}(\mathbf{F}_{q}))-m=q^{n}-m
\label{Nullity}
\end{equation}%
In the case $m<q^{n}$ where $m$ is strictly smaller than $q^{n}=\left\vert 
\mathbf{F}_{q}^{n}\right\vert $ we have $\dim (\ker (\Phi _{\vec{X}}))>0$
and the solution of the interpolation problem is not unique. There are
exactly $q^{\dim (\ker (\Phi _{\vec{X}}))}$ different solutions which
constitute an affine subspace of $F_{n}(\mathbf{F}_{q})$. Only in the case $%
m=q^{n},$ that means, when for all elements of $\mathbf{F}_{q}^{n}$ the
corresponding interpolation values are given, the solution is unique. If the
problem is underdetermined and no additional information about properties of
the possible solutions is given, any algorithm attempting to solve the
problem has to provide a selection criterion to pick a solution among the
affine space of possible solutions. The LS-algorithm chooses the most
parsimonious interpolating polynomial function according to some chosen term
order. A more geometric approach to pick one solution would be to select the
solution that is perpendicular (or orthogonal) to the affine space of
solutions. As stated in\ Remark and Theorem 32 of \cite{MR???????}, the
solution selected by the LS-algorithm is precisely the orthogonal solution.
For orthogonality to apply, a generalized inner product has to be defined on
the space $F_{n}(\mathbf{F}_{q}).$ We finish this subsection reviewing this
concepts (cf. \cite{MR???????}).\newline
The space $F_{n}(\mathbf{F}_{q})$ is endowed with a symmetric bilinear form $%
\left\langle \cdot ,\cdot \right\rangle :V\times V\rightarrow K,$ i.e. a
generalized inner product. Orthogonality and orthonormality are defined as
in an Euclidean vector space.\newline
For a given set $X\subseteq $ $\mathbf{F}_{q}^{n}$ of data points, consider
the evaluation epimorphism $\Phi _{\vec{X}}$ of the tuple $\vec{X}$ and its
kernel $\ker (\Phi _{\vec{X}}).$ Now, let $(u_{1},...,u_{s})$ be a basis of $%
\ker (\Phi _{\vec{X}})\subseteq F_{n}(\mathbf{F}_{q}).$ By the basis
extension theorem, we can extend the basis $(u_{1},...,u_{s})$ to a basis%
\begin{equation*}
(u_{1},...,u_{s},u_{s+1},...,u_{d})
\end{equation*}%
of the whole space $F_{n}(\mathbf{F}_{q}).$ (There are many possible ways
this extension can be performed. See more details below). As in example 6 of 
\cite{MR???????}, we can construct a generalized inner product on $F_{n}(%
\mathbf{F}_{q})$ by setting%
\begin{equation*}
\left\langle u_{i},u_{j}\right\rangle :=\delta _{ij}\text{ }\forall \text{ }%
i,j\in \{1,...,d\}
\end{equation*}%
The orthogonal solution of (\ref{Epimor.Interp.Cond.}) is the solution $%
v^{\ast }\in F_{n}(\mathbf{F}_{q})$ that is orthogonal to $\ker (\Phi _{\vec{%
X}}),$ i.e. it holds $\Phi _{\vec{X}}(v^{\ast })=\vec{b}$ and for an
arbitrary basis $(w_{1},...,w_{s})$ of $\ker (T)$ the following
orthogonality conditions hold%
\begin{equation*}
\left\langle w_{i},v^{\ast }\right\rangle =0\text{ }\forall \text{ }i\in
\{1,...,s\}
\end{equation*}%
The way we extend the basis $(u_{1},...,u_{s})$ of $\ker (\Phi _{\vec{X}})$
to a basis%
\begin{equation*}
(u_{1},...,u_{s},u_{s+1},...,u_{d})
\end{equation*}%
of the whole space $F_{n}(\mathbf{F}_{q})$ determines crucially the
generalized inner product we get by setting%
\begin{equation}
\left\langle u_{i},u_{j}\right\rangle :=\delta _{ij}\text{ }\forall \text{ }%
i,j\in \{1,...,d\}  \label{DefGenInnerProd}
\end{equation}%
Consequently, the orthogonal solution of $\Phi _{\vec{X}}(g)=\vec{b}$ may
vary according to the chosen extension $u_{s+1},...,u_{d}\in F_{n}(\mathbf{F}%
_{q}).$ In \cite{MR???????} a systematic way to extend the basis $%
(u_{1},...,u_{s})$ to a basis for the whole space is introduced. With the
basis obtained, the process of defining a generalized inner product
according to (\ref{DefGenInnerProd}) is called the \emph{standard
orthonormalization. }This is because the basis $%
(u_{1},...,u_{s},u_{s+1},...,u_{d})$ is orthonormal with respect to the
generalized inner product defined by (\ref{DefGenInnerProd}).\newline
As shown in Section 5 of \cite{MR???????}, using the generalized inner
product obtained by applying the standard orthonormalization, the functions
generated by the LS-algorithm are orthogonal solutions of the polynomial
interpolation problem as formulated in (\ref{Epimor.Interp.Cond.}). Under
these assumptions the orthogonal solution is also unique (see theorem 9 in 
\cite{MR???????}).\newline
The standard orthonormalization process depends on the way the elements of
the basis $(g_{nq\alpha })_{\alpha \in M_{q}^{n}}$ of fundamental monomial
functions are ordered. If they are ordered according to a term order, the
calculation of the orthogonal solution of (\ref{Epimor.Interp.Cond.}) yields
the same result as the LS-algorithm. If more general linear orders are
allowed, a more general algorithm emerges that is not restricted to the use
of term orders. This algorithm can be seen as a generalization of the
LS-algorithm. We call it the \emph{term-order-free reverse engineering
method. }The precise definition of the standard orthonormalization procedure
is stated in Section 4 of \cite{MR???????}. In the appendix we summarize the
steps of the term-order-free reverse engineering method.

\subsection{The questions studied in this paper}

The mathematical framework developed in the previous subsection will allow
us to answer the following questions regarding the LS-algorithm and its
generalization, the term-order-free reverse engineering method:

\begin{problem}
Given a function $f\in F_{n}(\mathbf{F}_{q}),$ what are the minimal
requirements on a set\linebreak $X\subseteq $ $\mathbf{F}_{q}^{n},$ such
that the LS-algorithm reverse engineers $f$ based on the knowledge of the
values that it takes on every point in the set $X$?
\end{problem}

\begin{problem}
Are there sets $X\subseteq $ $\mathbf{F}_{q}^{n}$ that make the LS-algorithm
more likely to succeed in reverse engineering a function $f\in F_{n}(\mathbf{%
F}_{q})$ based \emph{only} on the knowledge of the values that it takes on
every point in the set $X$?\footnote{%
A solution to this problem would provide criteria for the design of
experiments.}
\end{problem}

\begin{problem}
Given a function $f\in F_{n}(\mathbf{F}_{q})$ and an optimal set $X\subseteq 
$ $\mathbf{F}_{q}^{n}$ (in the sense of the previous problem). If the term
order used by the LS-algorithm is chosen randomly, can the probability of
success be calculated? If the linear order used by the term-order-free
method is chosen randomly, can the probability of success be calculated?
\end{problem}

\begin{problem}
What is the asymptotic behavior of the probability for a growing number of
variables $n$?
\end{problem}

It is pertinent to emphasize that, contrary to the scenario studied in \cite%
{MR2265002}, we do not necessarily assume that information about the number
of variables actually affecting $f$ is available. We will give further
comments on this issue at the end of the conclusions.

\section{Results}

\subsection{Basic definitions and facts}

For what follows recall that $M_{q}^{n}=\left\{ \alpha \in \left( 
%TCIMACRO{\U{2115} }%
%BeginExpansion
\mathbb{N}
%EndExpansion
_{0}\right) ^{n}\mid \alpha _{j}<q\text{ }\forall \text{ }j\in
\{1,...,n\}\right\} .$

\begin{lemma}[and Definition]
Let $K$ be a field, $n,q\in 
%TCIMACRO{\U{2115} }%
%BeginExpansion
\mathbb{N}
%EndExpansion
$ natural numbers and $K[\tau _{1},...,\tau _{n}]$ the polynomial ring in $n$
indeterminates over $K.$ Then the set of all polynomials of the form%
\begin{equation*}
\sum_{\alpha \in M_{q}^{n}}a_{\alpha }\tau _{1}^{\alpha _{1}}...\tau
_{n}^{\alpha _{n}}\in K[\tau _{1},...,\tau _{n}]
\end{equation*}%
with coefficients $a_{\alpha }\in K$ is a vector space over $K.$ We denote
this set with\linebreak $P_{q}^{n}(K)\subset K[\tau _{1},...,\tau _{n}].$
\end{lemma}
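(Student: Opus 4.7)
The plan is to verify that $P_q^n(K)$ is a $K$-linear subspace of the polynomial ring $K[\tau_1,\ldots,\tau_n]$. Since $K[\tau_1,\ldots,\tau_n]$ is already known to be a vector space over $K$ with the usual addition of polynomials and scalar multiplication by elements of $K$, it suffices to check the three standard subspace axioms for the subset $P_q^n(K)$ defined by the restriction $\alpha \in M_q^n$.

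First, I would observe that $P_q^n(K)$ is non-empty: choosing all coefficients $a_\alpha = 0$ produces the zero polynomial, which lies in $P_q^n(K)$. Next, for closure under addition, I would take two elements $p = \sum_{\alpha \in M_q^n} a_\alpha \tau_1^{\alpha_1}\cdots \tau_n^{\alpha_n}$ and $p' = \sum_{\alpha \in M_q^n} a'_\alpha \tau_1^{\alpha_1}\cdots \tau_n^{\alpha_n}$, and note that their sum equals $\sum_{\alpha \in M_q^n} (a_\alpha + a'_\alpha) \tau_1^{\alpha_1}\cdots \tau_n^{\alpha_n}$, which is again of the required form since the index set $M_q^n$ is the same and $a_\alpha + a'_\alpha \in K$. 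For closure under scalar multiplication, given $c \in K$ and $p$ as above, I would write $c \cdot p = \sum_{\alpha \in M_q^n} (c\,a_\alpha) \tau_1^{\alpha_1}\cdots \tau_n^{\alpha_n}$, which again belongs to $P_q^n(K)$.

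Alternatively, and perhaps more conceptually, I would note that $P_q^n(K)$ is by definition the $K$-linear span of the finite collection of monomials $\{\tau_1^{\alpha_1}\cdots \tau_n^{\alpha_n} : \alpha \in M_q^n\} \subset K[\tau_1,\ldots,\tau_n]$, and the linear span of any subset of a $K$-vector space is automatically a $K$-subspace. This avoids having to check the axioms individually.

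There is no real obstacle here; the statement is a straightforward bookkeeping exercise. The only point worth emphasizing is that the index set $M_q^n$ is finite and fixed (independent of $p$), so the operations of addition and scalar multiplication preserve the restriction $\alpha_j < q$ componentwise. This is precisely what distinguishes $P_q^n(K)$ from the full polynomial ring and what makes it a finite-dimensional vector space of dimension $|M_q^n| = q^n$, foreshadowing its role as the algebraic counterpart of $F_n(\mathbf{F}_q)$ in the sequel.
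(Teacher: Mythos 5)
Your proof is correct: the paper states this lemma without proof, treating it as routine, and your verification (either by checking the subspace axioms inside $K[\tau_1,\ldots,\tau_n]$ or by observing that $P_q^n(K)$ is the $K$-linear span of the monomials indexed by $M_q^n$) is exactly the standard argument the author implicitly relies on. Your closing remark that $\dim P_q^n(K) = |M_q^n| = q^n$ also correctly anticipates the isomorphism with $F_n(\mathbf{F}_q)$ established in the subsequent theorem.
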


\begin{theorem}
\label{Phi}Let $\mathbf{F}_{q}$ be a finite field and $n\in 
%TCIMACRO{\U{2115} }%
%BeginExpansion
\mathbb{N}
%EndExpansion
$ a natural number. Then the vector spaces $P_{q}^{n}(\mathbf{F}_{q})$ and $%
F_{n}(\mathbf{F}_{q})$ are isomorphic via the mapping%
\begin{eqnarray*}
\varphi &:&P_{q}^{n}(\mathbf{F}_{q})\rightarrow F_{n}(\mathbf{F}_{q}) \\
g &=&\sum_{\alpha \in M_{q}^{n}}a_{\alpha }\tau _{1}^{\alpha _{1}}...\tau
_{n}^{\alpha _{n}}\mapsto \varphi (g)(\vec{x}):=\sum_{\alpha \in
M_{q}^{n}}a_{\alpha }\overrightarrow{x}^{\alpha }
\end{eqnarray*}
\end{theorem}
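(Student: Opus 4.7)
The plan is to verify first that $\varphi$ is a well-defined linear map and then to establish that it is a bijection by combining a dimension count with an injectivity argument. Linearity is immediate from the defining formula: evaluation at a point is linear in the coefficients, so $\varphi(g+h) = \varphi(g) + \varphi(h)$ and $\varphi(\lambda g) = \lambda\varphi(g)$ hold pointwise on $\mathbf{F}_q^n$.

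Next, I would observe that both vector spaces have dimension $q^n$ over $\mathbf{F}_q$. The monomials $\tau^{\alpha} := \tau_1^{\alpha_1}\cdots \tau_n^{\alpha_n}$ with $\alpha\in M_q^n$ are $\mathbf{F}_q$-linearly independent in the polynomial ring (by the usual uniqueness of representation of polynomials) and span $P_q^n(\mathbf{F}_q)$ by definition, so $\dim P_q^n(\mathbf{F}_q) = |M_q^n| = q^n$. On the other hand $F_n(\mathbf{F}_q)$, as noted already in the excerpt, has dimension $q^n$, with the fundamental monomial functions $(g_{nq\alpha})_{\alpha\in M_q^n}$ forming a basis. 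Hence it suffices to prove that $\varphi$ is injective.

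The heart of the argument, and in my view the main obstacle, is to show $\ker\varphi = \{0\}$: if $g = \sum_{\alpha \in M_q^n} a_\alpha \tau^\alpha$ induces the zero function on $\mathbf{F}_q^n$, then every $a_\alpha$ is zero. I would prove this by induction on $n$. For $n=1$, $g$ is a univariate polynomial of degree less than $q$ with $q$ distinct roots (every element of $\mathbf{F}_q$), so $g = 0$ as a polynomial and all coefficients vanish. For the inductive step, regroup $g$ as a polynomial in $\tau_n$ over $\mathbf{F}_q[\tau_1,\dots,\tau_{n-1}]$:
\begin{equation*}
g \;=\; \sum_{k=0}^{q-1} g_k(\tau_1,\dots,\tau_{n-1})\,\tau_n^{k},
\end{equation*}
where each $g_k \in P_q^{n-1}(\mathbf{F}_q)$. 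Fixing an arbitrary $(x_1,\dots,x_{n-1})\in \mathbf{F}_q^{n-1}$ and substituting, the resulting univariate polynomial in $\tau_n$ has degree less than $q$ and vanishes on all of $\mathbf{F}_q$, so by the $n=1$ case each coefficient $g_k(x_1,\dots,x_{n-1})$ equals $0$. Since this holds for every $(x_1,\dots,x_{n-1})\in\mathbf{F}_q^{n-1}$, the inductive hypothesis forces $g_k \equiv 0$ in $P_q^{n-1}(\mathbf{F}_q)$ for each $k$, hence every coefficient $a_\alpha$ is zero and $g=0$.

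With injectivity secured, the equality of dimensions yields that $\varphi$ is bijective, completing the proof that it is an isomorphism of $\mathbf{F}_q$-vector spaces. As a conceptual cross-check, surjectivity can alternatively be exhibited directly: the characteristic function of a point $\vec{a}\in\mathbf{F}_q^n$ is represented by $\prod_{i=1}^n\bigl(1-(\tau_i-a_i)^{q-1}\bigr)\in P_q^n(\mathbf{F}_q)$, so every element of $F_n(\mathbf{F}_q)$ lies in the image of $\varphi$; this provides a redundant verification but is not needed once injectivity and the dimension count are in hand.
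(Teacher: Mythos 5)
Your proof is correct, but note that the paper itself offers no proof of this theorem: it is stated as part of the background framework imported from the cited companion reference, so there is nothing in the text to compare your argument against. Your route --- linearity, the dimension count $\dim P_{q}^{n}(\mathbf{F}_{q})=\left\vert M_{q}^{n}\right\vert =q^{n}=\dim F_{n}(\mathbf{F}_{q})$, and injectivity by induction on $n$ using the fact that a nonzero univariate polynomial of degree less than $q$ cannot vanish on all of $\mathbf{F}_{q}$ --- is the standard self-contained argument, and every step checks out. The one place where care is needed, and where you are appropriately careful, is that you should use only $\dim F_{n}(\mathbf{F}_{q})=q^{n}$ (which follows from the indicator functions of the $q^{n}$ points) rather than the paper's assertion that the fundamental monomial functions form a basis of $F_{n}(\mathbf{F}_{q})$, since that assertion is essentially equivalent to the theorem being proved and would make the argument circular. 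Your closing observation that the characteristic function of a point $\vec{a}$ is represented by $\prod_{i=1}^{n}\bigl(1-(\tau _{i}-a_{i})^{q-1}\bigr)\in P_{q}^{n}(\mathbf{F}_{q})$ is also correct and yields surjectivity directly, giving an alternative to the dimension count.
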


\begin{definition}
Let $K$ be a field, $n,m\in 
%TCIMACRO{\U{2115} }%
%BeginExpansion
\mathbb{N}
%EndExpansion
$ natural numbers and $K[\tau _{1},...,\tau _{n}]$ the polynomial ring in $n$
indeterminates over $K.$ Furthermore, let $g_{1},...,g_{m}\in K[\tau
_{1},...,\tau _{n}]$ be polynomials. The set%
\begin{equation*}
\left\langle g_{1},...,g_{m}\right\rangle :=\{h_{1}g_{1}+...h_{m}g_{m}\text{ 
}|\text{ }h_{1},...,h_{m}\in K[\tau _{1},...,\tau _{n}]\}
\end{equation*}%
is called the \emph{ideal} generated by $g_{1},...,g_{m}.$
\end{definition}

For a tuple $\vec{x}=(x_{1},...,x_{n})$ we write $x:=\{x_{1},...,x_{n}\}$
for the set containing all the entries in the tuple $\vec{x}.$

\subsection{Conditions on the data set}

\begin{definition}
Let $f\in F_{n}(\mathbf{F}_{q})$ be a polynomial function. The subset of $%
\mathbf{F}_{q}^{n}$ containing all values on which the polynomial function $%
f $ vanishes is denoted by%
\begin{equation*}
V(\varphi ^{-1}(f))
\end{equation*}%
where $\varphi $ is the mapping defined in theorem (\ref{Phi}).
\end{definition}

The following result tells us that if we are using the LS-algorithm to
reverse engineer a nonzero function we necessarily have to use a data set $X$
containing points where the function does not vanish.

\begin{theorem}
\label{TheoremNonZeroRHS}Let $f\in F_{n}(\mathbf{F}_{q})\backslash \{0\}$ be
a nonzero polynomial function. Furthermore let%
\begin{equation*}
\vec{X}:=(\vec{x}_{1},...,\vec{x}_{m})\in (\mathbf{F}_{q}^{n})^{m}
\end{equation*}%
be a tuple of $m$ different $n$-tuples with entries in the field $\mathbf{F}%
_{q}$, $\vec{b}\in \mathbf{F}_{q}^{m}$ be the vector defined by%
\begin{equation*}
b_{i}:=f(\vec{x}_{i}),\text{ }i=1,...,m
\end{equation*}%
and $v^{\ast }$the orthogonal solution of $\Phi _{\vec{X}}(g)=\vec{b}.$Then
if $v^{\ast }=f$ it follows\footnote{%
If $A$ is a set, $A^{c}$ denotes its complement}%
\begin{equation*}
V(\varphi ^{-1}(f))^{c}\cap X\neq \emptyset
\end{equation*}
\end{theorem}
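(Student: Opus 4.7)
The cleanest route is a proof by contraposition. The plan is to show that if $X$ is entirely contained in the vanishing set $V(\varphi^{-1}(f))$, then the orthogonal solution $v^{\ast}$ must be the zero function, which, since $f$ is assumed nonzero, contradicts $v^{\ast} = f$.

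First I would suppose $V(\varphi^{-1}(f))^{c}\cap X = \emptyset$, i.e.\ $X \subseteq V(\varphi^{-1}(f))$. By the very definition of $V(\varphi^{-1}(f))$, this forces $f(\vec{x}_i)=0$ for every $i\in\{1,\dots,m\}$, so the right-hand side vector satisfies $\vec{b}=\vec{0}$. Next I would observe that the zero function $0\in F_n(\mathbf{F}_q)$ obviously solves $\Phi_{\vec{X}}(g)=\vec{0}$, since $\Phi_{\vec{X}}$ is linear. Moreover, for any basis $(w_1,\dots,w_s)$ of $\ker(\Phi_{\vec{X}})$, the generalized inner product is bilinear, so $\langle w_i, 0\rangle = 0$ for every $i$; hence $0$ is orthogonal to $\ker(\Phi_{\vec{X}})$ and therefore qualifies as an orthogonal solution of $\Phi_{\vec{X}}(g)=\vec{b}$.

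Then I would invoke the uniqueness of the orthogonal solution (the result from \cite{MR???????} quoted in the previous subsection: once the standard orthonormalization has fixed the generalized inner product, the orthogonal solution of the interpolation problem is unique). This yields $v^{\ast} = 0$. Combined with the hypothesis $v^{\ast} = f$, one concludes $f = 0$, which contradicts $f\in F_n(\mathbf{F}_q)\setminus\{0\}$. The assumption $V(\varphi^{-1}(f))^{c}\cap X = \emptyset$ must therefore be false, proving the claim.

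I expect no serious obstacle: the argument reduces to bilinearity of $\langle\cdot,\cdot\rangle$ plus the uniqueness theorem cited above. The only subtlety worth flagging is that uniqueness of the orthogonal solution does depend on the choice of basis extension underlying the standard orthonormalization, but since the theorem statement refers to \emph{the} orthogonal solution, this is already built into the setup and requires no additional argument.
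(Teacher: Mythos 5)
Your proposal is correct and follows essentially the same route as the paper: contraposition, observing that $X\subseteq V(\varphi^{-1}(f))$ forces $\vec{b}=\vec{0}$, and concluding that the orthogonal solution of the homogeneous problem is the zero function, contradicting $f\neq 0$. The only difference is that you prove the last step inline (zero solves the homogeneous system, is orthogonal to $\ker(\Phi_{\vec{X}})$ by bilinearity, and uniqueness does the rest), whereas the paper simply cites this fact as Corollary 11 of its companion reference.
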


\begin{proof}
If $V(\varphi ^{-1}(f))^{c}\cap X=\emptyset $ then by definition of $%
V(\varphi ^{-1}(f)),$ the vector $\vec{b}$ would be equal to the zero vector 
$\vec{0}.$ From Corollary 11 in Subsection 2.2 of \cite{MR???????} we know
that the orthogonal solution $v^{\ast }$of $\Phi _{\vec{X}}(g)=\vec{0}$ is
the zero function, thus $v^{\ast }\neq f.$
\end{proof}

\begin{theorem}
Let $f\in F_{n}(\mathbf{F}_{q})\backslash \{0\}$ be a nonzero polynomial
function. Furthermore let%
\begin{equation*}
\vec{X}:=(\vec{x}_{1},...,\vec{x}_{m})\in (\mathbf{F}_{q}^{n})^{m}
\end{equation*}%
be a tuple of $m$ different $n$-tuples with entries in the field $\mathbf{F}%
_{q}$, $\vec{b}\in \mathbf{F}_{q}^{m}$ be the vector defined by%
\begin{equation*}
b_{i}:=f(\vec{x}_{i}),\text{ }i=1,...,m
\end{equation*}%
and $v^{\ast }$the orthogonal solution of $\Phi _{\vec{X}}(g)=\vec{b}.$ In
addition, assume $V(\varphi ^{-1}(f))^{c}\cap X\neq \emptyset .$ Then it
holds%
\begin{equation*}
v^{\ast }=f\Leftrightarrow f\in span(u_{s+1},...,u_{d})
\end{equation*}
\end{theorem}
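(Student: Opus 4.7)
The plan is to use the uniqueness of the orthogonal solution together with the fact that, by construction of the generalized inner product via (\ref{DefGenInnerProd}), the extended basis $(u_1,\ldots,u_s,u_{s+1},\ldots,u_d)$ is orthonormal. The two directions then become straightforward coordinate arguments in this orthonormal basis.

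For the direction $f\in\mathrm{span}(u_{s+1},\ldots,u_d)\Rightarrow v^{\ast}=f$, I would first observe that $\Phi_{\vec{X}}(f)=\vec{b}$ holds automatically from the definition of $\vec{b}$, so $f$ is some solution of the interpolation system. To show it is \emph{the} orthogonal one, I need to verify $f\perp\ker(\Phi_{\vec{X}})$. Writing $f=\sum_{i=s+1}^{d}c_{i}u_{i}$ and using orthonormality, $\langle u_{j},f\rangle=\sum_{i=s+1}^{d}c_{i}\delta_{ji}=0$ for every $j\in\{1,\ldots,s\}$. Since $(u_1,\ldots,u_s)$ is a basis of $\ker(\Phi_{\vec{X}})$, this yields $f\perp\ker(\Phi_{\vec{X}})$. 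By the uniqueness of the orthogonal solution (theorem~9 of \cite{MR???????}), $v^{\ast}=f$.

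For the converse direction $v^{\ast}=f\Rightarrow f\in\mathrm{span}(u_{s+1},\ldots,u_d)$, I expand $f$ in the full basis as $f=\sum_{i=1}^{d}c_{i}u_{i}$. Since $v^{\ast}=f$ is by hypothesis the orthogonal solution, it satisfies $\langle u_{j},f\rangle=0$ for every $j\in\{1,\ldots,s\}$. Orthonormality of the basis gives $\langle u_{j},f\rangle=c_{j}$, hence $c_{1}=\cdots=c_{s}=0$, so $f\in\mathrm{span}(u_{s+1},\ldots,u_d)$.

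Neither direction presents a real technical obstacle; the entire argument is a three-line coordinate computation once one invokes the orthonormality of the extended basis. The assumption $V(\varphi^{-1}(f))^{c}\cap X\neq\emptyset$ does not enter the algebra explicitly but is exactly what theorem \ref{TheoremNonZeroRHS} requires to exclude the degenerate case $\vec{b}=\vec{0}$ (where both sides of the equivalence would collapse, since a nonzero $f$ can neither equal $v^{\ast}=0$ nor lie in the complement $\mathrm{span}(u_{s+1},\ldots,u_d)$ of the kernel while simultaneously satisfying $\Phi_{\vec{X}}(f)=\vec{0}$). The only point worth flagging is that both implications rely on the \emph{uniqueness} of the orthogonal solution, which in turn relies on the nondegeneracy of the generalized inner product constructed by the standard orthonormalization procedure of \cite{MR???????}; I would cite this explicitly rather than reprove it.
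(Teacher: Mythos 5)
Your proof is correct and follows essentially the same route as the paper, whose entire argument is the one-line remark that the claim ``follows directly from the definition of orthogonal solution and its uniqueness''; you have simply written out the coordinate computation in the orthonormal basis that this remark implicitly relies on. Your side observation that the hypothesis $V(\varphi^{-1}(f))^{c}\cap X\neq\emptyset$ does not enter the algebra is also accurate.
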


\begin{proof}
The claim follows directly from the definition of orthogonal solution and
its uniqueness.
\end{proof}

\begin{remark}
\label{RemarkMinimalAmountData}From the necessary and sufficient condition%
\begin{equation}
f\in span(u_{s+1},...,u_{d})  \label{condition}
\end{equation}%
it becomes apparent, that if the function $f$ is a linear combination of
more than\linebreak $d-s=m$ fundamental monomial functions, $f$ can not be
found as an orthogonal solution $v^{\ast }$ of $\Phi _{\vec{X}}(g)=\vec{b}.$
In particular, if $f$ is a linear combination containing all $d$ fundamental
monomial functions in $(g_{nq\alpha })_{\alpha \in M_{q}^{n}},$ no \emph{%
proper} subset $X\subset \mathbf{F}_{q}^{n}$ of $\mathbf{F}_{q}^{n}$ will
allow us to find $f$ as orthogonal solution of $\Phi _{\vec{X}}(g)=\vec{b}$
(where $b_{i}:=f(\vec{x}_{i}),$ $\vec{x}_{i}\in X$).
\end{remark}

\begin{remark}
From the condition (\ref{condition}) follows that it is \emph{necessary}
that a monomial function appearing in $f$ is linearly independent of the
basis vectors $u_{1},...,u_{s}$ of $\ker (\Phi _{\vec{X}}).$ For this
reason, the set $X$ should be chosen in such a way that no fundamental
monomial function $(g_{nq\alpha })_{\alpha \in M_{q}^{n}}$ is linearly
dependent on the basis vectors $u_{1},...,u_{s}$ of $\ker (\Phi _{\vec{X}}).$
Otherwise, some of the terms appearing in $f$ might vanish on the set $X$
and wouldn't be detectable by any reverse engineering method, \cite%
{MR2086931}. This problem introduces a more general question about the
existence of vector subspaces in \textquotedblleft general
position\textquotedblright :
\end{remark}

\begin{definition}
\label{DefGenPos}Let $W$ be a finite dimensional vector space over a finite
field $\mathbf{F}_{q}$ with\linebreak $\dim (W)=d>0.$ Furthermore, let $%
\left( w_{1},...,w_{d}\right) $ be a fixed basis of $W$ and $s\in 
%TCIMACRO{\U{2115} }%
%BeginExpansion
\mathbb{N}
%EndExpansion
$ a natural number with $s<d.$ A vector subspace $U\subset W$ with $\dim
(U)=s$ is said to be in \emph{general position} with respect to the basis $%
\left( w_{1},...,w_{d}\right) $ if for any basis $(v_{1},...,v_{s})$ of $U$
and any injective mapping%
\begin{equation*}
\pi :\{1,...,(d-s)\}\rightarrow \{1,...,d\}
\end{equation*}%
the vectors%
\begin{equation*}
v_{1},...,v_{s},w_{\pi (1)},...,w_{\pi (d-s)}
\end{equation*}%
are linearly independent.
\end{definition}

It can be shown, that if the cardinality $q$ of the finite field $\mathbf{F}%
_{q}$ is sufficiently\ large, proper subspaces in general position of any
positive dimension always exist. The proof is provided in the appendix.

Now assume that $\ker (\Phi _{\vec{X}})$ is in general position with respect
to the basis $(g_{nq\alpha })_{\alpha \in M_{q}^{n}}$ of $F_{n}(\mathbf{F}%
_{q}).$ Following the basis extension theorem and due to the general
position of $\ker (\Phi _{\vec{X}})$, we can extend the basis $%
(u_{1},...,u_{s})$ of $\ker (\Phi _{\vec{X}})$ to a basis%
\begin{equation*}
(u_{1},...,u_{s},u_{s+1},...,u_{d})
\end{equation*}%
of the whole space $F_{n}(\mathbf{F}_{q}),$ where $\{u_{s+1},...,u_{d}\}%
\subset $ $\{g_{nq\alpha }\}_{\alpha \in M_{q}^{n}}$ is \emph{any} subset
with $d-s$ elements of $\{g_{nq\alpha }\}_{\alpha \in M_{q}^{n}}.$ Now we
can construct a generalized inner product on $F_{n}(\mathbf{F}_{q})$ by
setting%
\begin{equation*}
\left\langle u_{i},u_{j}\right\rangle :=\delta _{ij}\text{ }\forall \text{ }%
i,j\in \{1,...,d\}
\end{equation*}

The advantage in this situation is that there is no bias imposed by the data
on the monomial functions that can be used to extend the basis $%
(u_{1},...,u_{s})$ to a basis of $F_{n}(\mathbf{F}_{q}),$ i.e. there are no
restrictions on the structure of $\ker (\Phi _{\vec{X}})^{\perp }.$ In
addition, having this degree of freedom, it is possible to calculate the
exact probability of success of the method based on the number of
fundamental monomial functions actually contained in $f$. We will give an
explicit probability formula in the next Subsection. For our further
analysis we need the following intermediate result, whose proof is left to
the reader:

\begin{lemma}[and Definition]
Let $\mathbf{F}_{q}$ be a finite field, $n,s\in 
%TCIMACRO{\U{2115} }%
%BeginExpansion
\mathbb{N}
%EndExpansion
$ natural numbers with\linebreak $s\leq \dim (F_{n}(\mathbf{F}_{q})).$
Furthermore, let $U\subset F_{n}(\mathbf{F}_{q})$ be an $s$-dimensional
subspace. Then the set%
\begin{equation*}
V(U):=V(\left\langle \varphi ^{-1}(u_{1}),...,\varphi
^{-1}(u_{s})\right\rangle )\subseteq \mathbf{F}_{q}^{n}
\end{equation*}%
where $(u_{1},...,u_{s})$ is any basis of $U$ is independent on the choice
of basis and it's called the \emph{variety of the subspace }$U.$
\end{lemma}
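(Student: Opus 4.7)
The plan is to show that the ideal $\langle \varphi^{-1}(u_1),\ldots,\varphi^{-1}(u_s)\rangle$ in $\mathbf{F}_q[\tau_1,\ldots,\tau_n]$ is itself independent of the choice of basis for $U$; the independence of $V(U)$ then follows because the common zero set of an ideal depends only on the ideal, not on any generating set. So I would reduce the claim about varieties to a claim about ideals, and reduce the claim about ideals to the linearity of $\varphi^{-1}$ together with the trivial fact that field elements sit inside the polynomial ring.

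First I would fix two bases $(u_1,\ldots,u_s)$ and $(u_1',\ldots,u_s')$ of $U$. Since both are bases of the same $s$-dimensional subspace, there is an invertible matrix $A=(A_{ij})\in \mathrm{GL}_s(\mathbf{F}_q)$ with $u_i'=\sum_{j=1}^{s}A_{ij}\,u_j$ for every $i\in\{1,\ldots,s\}$. Applying the $\mathbf{F}_q$-linear isomorphism $\varphi^{-1}$ from Theorem~\ref{Phi} yields $\varphi^{-1}(u_i')=\sum_{j=1}^{s}A_{ij}\,\varphi^{-1}(u_j)$. Because each scalar $A_{ij}$ is a (constant) polynomial in $\mathbf{F}_q[\tau_1,\ldots,\tau_n]$, the right-hand side is a polynomial combination of $\varphi^{-1}(u_1),\ldots,\varphi^{-1}(u_s)$, so $\varphi^{-1}(u_i')\in \langle\varphi^{-1}(u_1),\ldots,\varphi^{-1}(u_s)\rangle$. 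This gives the inclusion $\langle\varphi^{-1}(u_1'),\ldots,\varphi^{-1}(u_s')\rangle\subseteq\langle\varphi^{-1}(u_1),\ldots,\varphi^{-1}(u_s)\rangle$. Swapping the roles of the two bases, or equivalently using that $A^{-1}$ has entries in $\mathbf{F}_q$ as well, gives the reverse inclusion, so the two ideals coincide.

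Finally, since for any ideal $I\subseteq\mathbf{F}_q[\tau_1,\ldots,\tau_n]$ the variety $V(I)=\{\vec{x}\in\mathbf{F}_q^n\mid g(\vec{x})=0\ \forall\,g\in I\}$ is a function of $I$ alone, the equality of the two ideals forces the equality of the corresponding zero sets in $\mathbf{F}_q^n$. Hence $V(U)$ is well defined.

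There is no real obstacle here: the whole argument hinges on recognizing that passing to a different basis of $U$ amounts to multiplying the column vector $(\varphi^{-1}(u_1),\ldots,\varphi^{-1}(u_s))^{t}$ by an invertible matrix with entries in $\mathbf{F}_q\subset\mathbf{F}_q[\tau_1,\ldots,\tau_n]$, a transformation that manifestly preserves the ideal they generate. The only thing one must be a little careful about is to invoke $\varphi^{-1}$ as an $\mathbf{F}_q$-linear map so that linear relations among the $u_i$ transport faithfully to linear relations among the $\varphi^{-1}(u_i)$; this is precisely the content of Theorem~\ref{Phi}.
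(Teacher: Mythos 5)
Your proof is correct. The paper explicitly leaves this lemma's proof to the reader, and your argument — transporting a change of basis through the linear isomorphism $\varphi^{-1}$ to show the two generating sets yield the same ideal, hence the same variety — is exactly the standard argument the authors intend; one could shorten it slightly by observing that $V(U)$ is simply the common zero locus of all functions in $U$, which is manifestly basis-independent, but your route through equality of ideals is equally valid.
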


Now the following question arises: How should the set $X$ be chosen in order
to have $\ker (\Phi _{\vec{X}})$ in general position with respect to the
basis $(g_{nq\alpha })_{\alpha \in M_{q}^{n}}$? For a given natural number $%
s<d:=\dim (F_{n}(\mathbf{F}_{q}))$ the idea is to start from a basis $%
(u_{1},...,u_{s})$ of a vector subspace $U\subset F_{n}(\mathbf{F}_{q})$ in
general position with respect to the basis $(g_{nq\alpha })_{\alpha \in
M_{q}^{n}}.$ The next step is to calculate the variety 
\begin{equation*}
Y:=V(\left\langle \varphi ^{-1}(u_{1}),...,\varphi ^{-1}(u_{s})\right\rangle
)\subseteq \mathbf{F}_{q}^{n}
\end{equation*}%
We assume $Y\neq \emptyset $ and order its elements arbitrarily to a tuple $%
\vec{Y}:=(\vec{y}_{1},...,\vec{y}_{m})\in (\mathbf{F}_{q}^{n})^{m},$ where $%
m:=\left\vert Y\right\vert .$ We know from Remark 23 in Subsection 3.2 of 
\cite{MR???????} that $\dim (\ker (\Phi _{\vec{Y}}))=\dim (F_{n}(\mathbf{F}%
_{q}))-\left\vert Y\right\vert =d-m.$ Now, in general, for the kernel $\ker
(\Phi _{\vec{Y}})$ of the corresponding evaluation epimorphism $\Phi _{\vec{Y%
}}$ it holds%
\begin{equation*}
U\subseteq \ker (\Phi _{\vec{Y}})
\end{equation*}%
and therefore $s\leq \dim (\ker (\Phi _{\vec{Y}}))=d-m,$ i.e. $m\leq d-s$.
Now, the ideal scenario would be the case $\ker (\Phi _{\vec{Y}})=U,$ i.e. $%
m=d-s.$ A less optimistic scenario is given when $U\subset \ker (\Phi _{\vec{%
Y}})$ is a proper subspace of $\ker (\Phi _{\vec{Y}}).$ In such a situation,
ideally we would wish for $\ker (\Phi _{\vec{Y}})$ to be itself in general
position with respect to the basis $(g_{nq\alpha })_{\alpha \in M_{q}^{n}}$.
This issues raise the following question:\newline
When does there exist a subspace $U\subset F_{n}(\mathbf{F}_{q})$ in general
position with respect to the basis $(g_{nq\alpha })_{\alpha \in M_{q}^{n}}$
with $\dim (U)<\dim (F_{n}(\mathbf{F}_{q}))$ that in addition satisfies%
\begin{equation}
\left\vert V(\left\langle \varphi ^{-1}(u_{1}),...,\varphi
^{-1}(u_{s})\right\rangle )\right\vert =\dim (F_{n}(\mathbf{F}_{q}))-\dim (U)
\label{CodimBed}
\end{equation}%
This is an interesting question that requires further research. It is
related to whether the subspace $U$ is an ideal of $F_{n}(\mathbf{F}_{q}),$
when $F_{n}(\mathbf{F}_{q})$ is seen as an algebra with the multiplication
of polynomial functions as the multiplicative operation. In the Appendix we
provide examples in which two subspaces, both in general position, show a
different behavior regarding the condition (\ref{CodimBed}). We formalize
this property:

\begin{definition}
\label{codimCondition}Let $U\subset F_{n}(\mathbf{F}_{q})$ be a subspace and 
$(u_{1},...,u_{s})$ an arbitrary basis of $U$. $U$ is said to satisfy the 
\emph{codimension condition} if it holds%
\begin{equation*}
codim(U)=\left\vert V(\left\langle \varphi ^{-1}(u_{1}),...,\varphi
^{-1}(u_{s})\right\rangle )\right\vert
\end{equation*}%
where $codim(U):=\dim (F_{n}(\mathbf{F}_{q}))-\dim (U).$
\end{definition}

A subspace $U\subset F_{n}(\mathbf{F}_{q})$ in general position with respect
to the basis $(g_{nq\alpha })_{\alpha \in M_{q}^{n}}$ that satisfies the
codimension condition allows for the construction of an optimal set for use
with the LS-algorithm. The set $Y:=V(\left\langle \varphi
^{-1}(u_{1}),...,\varphi ^{-1}(u_{s})\right\rangle )$ (where $%
u_{1},...,u_{s} $ is a basis of $U$) has namely the property $\ker (\Phi _{%
\vec{Y}})=U,$ i.e. $\ker (\Phi _{\vec{Y}})$ is in general position with
respect to the basis $(g_{nq\alpha })_{\alpha \in M_{q}^{n}}.$ In other
words, subspaces in general position that satisfy the codimension condition
provide a basic component for a constructive method for generating optimal
data sets. More generally we define:

\begin{definition}
\label{OptimalSet}A set $X\subseteq \mathbf{F}_{q}^{n}$ such that $\ker
(\Phi _{\vec{X}})$ is in general position with respect to the basis $%
(g_{nq\alpha })_{\alpha \in M_{q}^{n}}$ is referred to as \emph{optimal}.
\end{definition}

\begin{remark}[and Definition]
\label{DefPseudoOpt}Additional study is required to prove whether optimal
data sets exist in general. (See the Appendix for concrete examples.)
However, if no optimal sets can be determined, it is still advantageous to
work with a data set $X$ that was obtained as $V(\left\langle \varphi
^{-1}(u_{1}),...,\varphi ^{-1}(u_{s})\right\rangle ),$ where $%
(u_{1},...,u_{s})$ is a basis for a subspace $U$ in general position with
respect to the basis $(g_{nq\alpha })_{\alpha \in M_{q}^{n}}.$ In this case,
at least $U\subseteq \ker (\Phi _{\vec{Y}})$ still holds and it might be
that the dimensional difference between $U$ and $\ker (\Phi _{\vec{Y}})$ is
small. We call such data sets \emph{pseudo-optimal}.\newline
\end{remark}

\subsection{Probabilities of finding the original function as the orthogonal
solution}

\begin{theorem}
\label{ThmProbForm}Let $\mathbf{F}_{q}$ be a finite field, $n,m\in 
%TCIMACRO{\U{2115} }%
%BeginExpansion
\mathbb{N}
%EndExpansion
$ natural numbers with $m<\dim (F_{n}(\mathbf{F}_{q}))=:d.$ Furthermore, let 
$f\in F_{n}(\mathbf{F}_{q})\backslash \{0\}$ be a nonzero function
consisting of a linear combination of exactly $t$ fundamental monomial
functions and 
\begin{equation*}
\vec{X}:=(\vec{x}_{1},...,\vec{x}_{m})\in (\mathbf{F}_{q}^{n})^{m}
\end{equation*}%
a tuple of $m$ different $n$-tuples with entries in the field $\mathbf{F}%
_{q} $ such that $X$ is optimal. Now let $\vec{b}\in \mathbf{F}_{q}^{m}$ be
the vector defined as%
\begin{equation*}
b_{i}:=f(\vec{y}_{i}),\text{ }i=1,...,m
\end{equation*}%
$s:=\dim (\ker (\Phi _{\vec{X}}))=d-m$ (cf. (\ref{Nullity})), $\left(
u_{1},...,u_{s}\right) $ a basis for $\ker (\Phi _{\vec{X}})$ and\linebreak $%
\{u_{s+1},...,u_{d}\}\subset \{g_{nq\alpha }\}_{\alpha \in M_{q}^{n}}$ \emph{%
an arbitrary} subset containing $d-s$ elements. Then the probability $P$
that the orthogonal solution $g^{\ast }$ of $\Phi _{\vec{X}}(g)=\vec{b}$
with respect to the generalized inner product%
\begin{equation*}
\left\langle u_{i},u_{j}\right\rangle :=\delta _{ij}\text{ }\forall \text{ }%
i,j\in \{1,...,d\}
\end{equation*}%
fulfills $f=g^{\ast }$ is given by%
\begin{equation}
P=\frac{%
\begin{pmatrix}
q^{n}-t \\ 
q^{n}-m%
\end{pmatrix}%
}{%
\begin{pmatrix}
q^{n} \\ 
m%
\end{pmatrix}%
}\text{ if }t\leq m  \label{Prob.Formula}
\end{equation}%
and%
\begin{equation*}
P=0\text{ if }t>m
\end{equation*}
\end{theorem}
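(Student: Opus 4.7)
The plan is to reduce the probabilistic statement to a purely combinatorial counting problem by means of the characterization already given in the previous theorem, namely $v^{*}=f \Longleftrightarrow f\in \mathrm{span}(u_{s+1},\dots ,u_{d})$. The probability space is the uniform distribution on the $m$-subsets of the set $\{g_{nq\alpha}\}_{\alpha\in M_{q}^{n}}$ of fundamental monomial functions from which the extension $\{u_{s+1},\dots ,u_{d}\}$ is drawn; optimality of $X$ guarantees that every such subset does in fact extend $(u_{1},\dots ,u_{s})$ to a basis of $F_{n}(\mathbf{F}_{q})$, so the generalized inner product and the associated orthogonal solution $g^{*}$ are well defined for each outcome.

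First I would verify that the previous theorem applies, i.e.\ that $V(\varphi^{-1}(f))^{c}\cap X\neq\emptyset$. Suppose instead that $f\in\ker(\Phi_{\vec{X}})$. Writing $f=\sum_{\alpha\in T}c_{\alpha}g_{nq\alpha}$ with $|T|=t$ and all $c_{\alpha}\neq 0$, and assuming $t\leq m$, optimality lets us extend $T$ to some $m$-subset $T'$ of fundamental monomials; by general position, $(u_{1},\dots ,u_{s})$ together with $\{g_{nq\alpha}:\alpha\in T'\}$ is a basis of $F_{n}(\mathbf{F}_{q})$, and since $f\in\mathrm{span}(u_{1},\dots ,u_{s})$ and also $f\in\mathrm{span}\{g_{nq\alpha}:\alpha\in T\}\subseteq\mathrm{span}\{g_{nq\alpha}:\alpha\in T'\}$, uniqueness of coordinates forces all $c_{\alpha}=0$, contradicting $f\neq 0$. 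Hence the hypothesis of the previous theorem is met and the equivalence $g^{*}=f\Leftrightarrow f\in\mathrm{span}(u_{s+1},\dots ,u_{d})$ holds for the chosen extension.

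Next I would translate the spanning condition into a containment condition on the random subset. Let $T\subset\{g_{nq\alpha}\}_{\alpha\in M_{q}^{n}}$ denote the $t$ fundamental monomials appearing with nonzero coefficient in $f$. The direction $T\subseteq\{u_{s+1},\dots ,u_{d}\}\Rightarrow f\in\mathrm{span}(u_{s+1},\dots ,u_{d})$ is immediate. Conversely, if $f\in\mathrm{span}(u_{s+1},\dots ,u_{d})$, then by general position $(u_{1},\dots ,u_{s},u_{s+1},\dots ,u_{d})$ is a basis of $F_{n}(\mathbf{F}_{q})$ in which $f$ has all of its $u_{1},\dots ,u_{s}$-coordinates equal to zero; but the fundamental monomials themselves are linearly independent, so comparing the expansion of $f$ in this basis with its unique expansion $f=\sum_{\alpha\in T}c_{\alpha}g_{nq\alpha}$ in the basis $(g_{nq\alpha})_{\alpha\in M_{q}^{n}}$ forces every $g_{nq\alpha}$ with $\alpha\in T$ to lie among $u_{s+1},\dots ,u_{d}$, i.e.\ $T\subseteq\{u_{s+1},\dots ,u_{d}\}$.

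The problem is now purely combinatorial. Among the $\binom{q^{n}}{m}$ equally likely $m$-subsets of the $q^{n}$ fundamental monomials, those containing the fixed $t$-element set $T$ are in bijection with the $(m-t)$-subsets of the remaining $q^{n}-t$ monomials, of which there are $\binom{q^{n}-t}{m-t}=\binom{q^{n}-t}{q^{n}-m}$. This yields the claimed formula when $t\leq m$; if $t>m$ the event $T\subseteq\{u_{s+1},\dots ,u_{d}\}$ is vacuous because $|\{u_{s+1},\dots ,u_{d}\}|=m<t$, so $P=0$. The only nontrivial step is the converse direction in the translation paragraph above, where care is needed to use both general position (so that the extension really is a basis) and the uniqueness of the monomial expansion; the rest is bookkeeping and a binomial symmetry identity.
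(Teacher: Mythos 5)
Your proof is correct and follows essentially the same route as the paper's: general position guarantees that every $m$-subset of fundamental monomials extends $(u_{1},\dots,u_{s})$ to a basis, the event $g^{\ast}=f$ is translated into the containment of the $t$ monomials of $f$ in the chosen extension, and the formula drops out of the count $\binom{q^{n}-t}{m-t}\big/\binom{q^{n}}{m}$ (the paper counts ordered extensions, so the $(d-s)!$ factors cancel). Your write-up is in fact more careful than the paper's, which leaves implicit both the verification that $f\notin\ker(\Phi_{\vec{X}})$ and the converse direction of the translation from the spanning condition to the containment condition.
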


\begin{proof}
Due to the definition of general position, there are exactly%
\begin{equation*}
(d-s)!%
\begin{pmatrix}
\dim (F_{n}(\mathbf{F}_{q})) \\ 
\dim (F_{n}(\mathbf{F}_{q}))-s%
\end{pmatrix}%
=(d-s)!%
\begin{pmatrix}
d \\ 
d-s%
\end{pmatrix}%
=(d-s)!%
\begin{pmatrix}
q^{n} \\ 
m%
\end{pmatrix}%
\end{equation*}%
different ways to extend a basis $(u_{1},...,u_{s})$ of $U$ to a basis of $%
F_{n}(\mathbf{F}_{q})$ using $m=d-s$ fundamental monomial functions. If $%
t\leq m,$ among such extensions, only%
\begin{equation*}
(d-s)!%
\begin{pmatrix}
d-t \\ 
d-s-t%
\end{pmatrix}%
=(d-s)!%
\begin{pmatrix}
q^{n}-t \\ 
s%
\end{pmatrix}%
=(d-s)!%
\begin{pmatrix}
q^{n}-t \\ 
q^{n}-m%
\end{pmatrix}%
\end{equation*}%
use the $t$ fundamental monomial functions appearing in $f.$ Now (\ref%
{Prob.Formula}) follows immediately. If, on the other hand, $t>m,$ the
number of fundamental monomial functions usable to extend a basis $%
(u_{1},...,u_{s})$ of $\ker (\Phi _{\vec{X}})$ to a basis of $F_{n}(\mathbf{F%
}_{q})$ is too small and $\ker (\Phi _{\vec{X}})^{\perp }$ is not big enough
to generate $f.$
\end{proof}

\begin{remark}
\label{DificultiesProbForm}If the elements in the basis $(g_{nq\alpha
})_{\alpha \in M_{q}^{n}}$ are ordered in a decreasing way according to a
term order (the biggest element is at the left end, the smallest at the
right end and position $t$ means counting $t$ elements from the right to the
left) an analogous probability formula would be%
\begin{equation}
P=\frac{\text{Number of arrangements that place the mon. functions in }f%
\text{ after position }s}{\text{Total number of arrangements}}
\label{Mon.Prob.Formula}
\end{equation}%
where an arrangement is an order of the elements of $(g_{nq\alpha })_{\alpha
\in M_{q}^{n}}$ that obeys a term order. (Two different term orders could
generate the same arrangement of the elements in the \textit{finite} set $%
\{g_{nq\alpha }\}_{\alpha \in M_{q}^{n}}$). So, for instance, if $f$
contains a term involving the monomial function $x_{1}^{q-1}\cdot ...\cdot
x_{n}^{q-1},$ then the above probability (\ref{Mon.Prob.Formula}) would be
equal to zero, since every arrangement of the elements in $\{g_{nq\alpha
}\}_{\alpha \in M_{q}^{n}}$ that obeys a term order would make that monomial
function biggest. (It is inherent to term orders to make some monomial
functions \emph{always} biggest). In more general terms, it is difficult to
make estimates about the numbers involved in (\ref{Mon.Prob.Formula}). This
shows some of the disadvantages of using term orders.
\end{remark}

\begin{remark}
\label{RemAsymptBeh}Since for relatively small $n$ and $q$ the number $%
d:=q^{n}$ is already very large, it is obvious that one should calculate the
asymptotic behavior of the probability formula (\ref{Prob.Formula}) for $%
d\rightarrow \infty .$ Indeed, we have with $t\leq m$ 
\begin{eqnarray*}
0 &\leq &\tfrac{%
\begin{pmatrix}
d-t \\ 
d-m%
\end{pmatrix}%
}{%
\begin{pmatrix}
d \\ 
m%
\end{pmatrix}%
}=\dfrac{\dfrac{(d-t)!}{(d-m)!(m-t))!}}{\dfrac{d!}{m!(d-m)!}} \\
&=&\dfrac{(d-t)!m!}{(m-t)!d!}\leq \dfrac{(d-t)!m!}{d!} \\
&=&\frac{m!}{d(d-1)...(d-t+1)}\longrightarrow 0\text{ for }d\longrightarrow
\infty
\end{eqnarray*}%
If we write the amount of data used in proportion to the size $d=q^{n}$ of
the space $\mathbf{F}_{q}^{n},$ and the number of terms displayed by $f$
relative to the size $q^{n}$ of the basis $(g_{nq\alpha })_{\alpha \in
M_{q}^{n}},$ it becomes apparent how quickly the probability formula
converges to $0$ for $d\longrightarrow \infty .$ So let $r:=m/d$ and $\gamma
:=d-t.$ Then we would have%
\begin{eqnarray*}
\frac{P}{r^{d}} &=&\tfrac{%
\begin{pmatrix}
d-t \\ 
d-m%
\end{pmatrix}%
}{r^{d}%
\begin{pmatrix}
d \\ 
m%
\end{pmatrix}%
}=\dfrac{(d-t)!m!}{r^{d}(m-t)!d!} \\
&=&\dfrac{m(m-1)...(m-t+1)}{r^{d}d(d-1)...(d-t+1)}=\dfrac{rd(rd-1)...(rd-t+1)%
}{r^{d}d(d-1)...(d-t+1)} \\
&=&\dfrac{rdrd(1-\frac{1}{rd})...rd(1-\frac{t-1}{rd})}{r^{d}dd(1-\frac{1}{d}%
)...d(1-\frac{t-1}{d})}=\dfrac{r^{t}d^{t}(1-\frac{1}{rd})...(1-\frac{t-1}{rd}%
)}{r^{d}d^{t}(1-\frac{1}{d})...(1-\frac{t-1}{d})} \\
&=&\dfrac{r^{t}(1-\frac{1}{rd})...(1-\frac{t-1}{rd})}{r^{d}(1-\frac{1}{d}%
)...(1-\frac{t-1}{d})} \\
&=&\dfrac{r^{-\gamma }(1-\frac{1}{rd})...(1-\frac{t-1}{rd})}{(1-\frac{1}{d}%
)...(1-\frac{t-1}{d})}\longrightarrow r^{-\gamma }\text{ for }%
d\longrightarrow \infty
\end{eqnarray*}%
In particular, it holds%
\begin{equation*}
\frac{%
\begin{pmatrix}
d-t \\ 
d-rd%
\end{pmatrix}%
}{%
\begin{pmatrix}
d \\ 
rd%
\end{pmatrix}%
}\approx r^{t}\text{ for big }d
\end{equation*}%
This expression shows in a straightforward way how big the proportional
amount of data should be in order to have an acceptable confidence in the
obtained result. It also shows that for $t$ close to $d$ the probability is
very low and the reverse engineering not feasible. Usually no information
about $t$ is available, so it is advisable to work with the maximal $t,$
namely $d-1$ or with an average value for $t.$
\end{remark}

\section{Conclusions}

The results we have obtained in the previous section provide guidelines on
how to design experiments to generate data to be used with the LS-algorithm
for the purpose of reverse engineering a biochemical network.\newline
The following are minimal requirements on a set $X\subseteq $ $\mathbf{F}%
_{q}^{n},$ such that the LS-algorithm reverse engineers $f$ based on the
knowledge of the values that it takes on every point in the set $X:$

\begin{enumerate}
\item If the LS-algorithm is used to reverse engineer a nonzero function $%
f\in F_{n}(\mathbf{F}_{q})\backslash \{0\}$, necessarily the data set $X$
used must contain points were the function does not vanish. In other words,
not all the interpolation conditions must be of the type $\vec{x}_{i}\mapsto
0$ (Theorem \ref{TheoremNonZeroRHS}).

\item If the LS-algorithm is used to reverse engineer a function $f\in F_{n}(%
\mathbf{F}_{q})\backslash \{0\}$ displaying $t$ different terms, it requires 
\textbf{at least} $t$ different data points to \textit{completely} reverse
engineer $f$ (Remark \ref{RemarkMinimalAmountData}).

\item If $f\in F_{n}(\mathbf{F}_{q})\backslash \{0\}$ is a polynomial
function containing all $p^{n}$ possible fundamental monomial functions, no 
\emph{proper} subset $X\subset \mathbf{F}_{q}^{n}$ of $\mathbf{F}_{q}^{n}$
will allow the LS-algorithm to find $f$ (Remark \ref{RemarkMinimalAmountData}%
).
\end{enumerate}

Our results also make possible the identification of optimal sets $%
X\subseteq $ $\mathbf{F}_{q}^{n}$ that make the LS-algorithm more likely to
succeed in reverse engineering a function $f\in F_{n}(\mathbf{F}_{q})$ based
only on the knowledge of the values that it takes on every point in the set $%
X.$ Optimal data sets $X\subset \mathbf{F}_{q}^{n}$ are characterized by the
property that $\ker (\Phi _{\vec{X}})$ is in general position with respect
to the basis $(g_{nq\alpha })_{\alpha \in M_{q}^{n}}$ (see Definitions \ref%
{OptimalSet} and \ref{DefGenPos}). Their advantage is given by the fact that
they do not impose constraints on the set of candidate terms that can be
used to construct a solution. Summarizing we can say:

\begin{enumerate}
\item Even though such sets can be constructed in particular examples (see
Appendix), further research is required to prove their existence in general
terms.

\item If no optimal sets can be determined, it is still advantageous to work
with pseudo-optimal data sets (see Remark and Definition \ref{DefPseudoOpt}).
\end{enumerate}

Since the identified optimal data sets are sets $X\subset \mathbf{F}_{q}^{n}$
of discretized vectors, in a real application, the optimal data set $X$ has
to be transformed back to a corresponding set $\widetilde{X}\subset 
%TCIMACRO{\U{211d} }%
%BeginExpansion
\mathbb{R}
%EndExpansion
^{n}$ of real vectors. This transformation can be performed using an
"inverse" function of the discretization mapping (\ref{DiscretMapping}).
This "inverse" function has to be defined by the user, given the fact that
discretization mappings are highly non-injective and by definition map
entire subsets $Z\subset 
%TCIMACRO{\U{211d} }%
%BeginExpansion
\mathbb{R}
%EndExpansion
^{n}$ into a single value $\vec{z}\in \mathbf{F}_{q}^{n}$.

Having characterized optimal data sets, the next step in our approach was to
provide an exact formula for the probability that the LS-algorithm will find
the correct model under the assumption that an optimal data set is used as
input. As stated in Remark \ref{DificultiesProbForm}, we weren't able to
find such a formula for the LS-algorithm. The biggest difficulty we face is
related to the use of term orders inherent to the LS-algorithm. We overcome
this problem by considering a generalization of the LS-algorithm which we
call the term-order-free reverse engineering method (see Appendix). This
method not only allows for the calculation of the success probability but it
also eliminates the issues and arbitrariness linked to the use of term
orders (see Remark \ref{DificultiesProbForm}). In conclusion, our results on
this issue are:

\begin{enumerate}
\item It is still an open problem how to derive a formula for the success
probability of the LS-algorithm when optimal data sets are used as an input
and the term order is chosen randomly. As stated in Remark \ref%
{DificultiesProbForm}, one of the main problems here is related to the use
of term orders inherent to the LS-algorithm.

\item Let $f\in F_{n}(\mathbf{F}_{q})\backslash \{0\}$ be a nonzero function
consisting of the linear combination of exactly $t$ fundamental monomial
functions. If the linear order used by the term-order-free method is chosen
randomly, the probability of successfully retrieving $f$ using an optimal
data set $X$ of cardinality $\left\vert X\right\vert =m$ is given by (see
Theorem \ref{ThmProbForm})%
\begin{equation}
P=\frac{%
\begin{pmatrix}
q^{n}-t \\ 
q^{n}-m%
\end{pmatrix}%
}{%
\begin{pmatrix}
q^{n} \\ 
m%
\end{pmatrix}%
}\text{ if }t\leq m  \label{ProbForm2}
\end{equation}%
and%
\begin{equation*}
P=0\text{ if }t>m
\end{equation*}

\item Let $d=q^{n}$ be the cardinality of the space $\mathbf{F}_{q}^{n}.$
Furthermore, let $X$ be an optimal data set with cardinality $\left\vert
X\right\vert =m$ and $r:=m/d$ (note that $0<r<1$). Then the asymptotic
behavior of the probability formula (\ref{ProbForm2}) for $d\rightarrow
\infty $ (i.e. for $n\rightarrow \infty $) satisfies (see Remark \ref%
{RemAsymptBeh})%
\begin{equation*}
\frac{%
\begin{pmatrix}
d-t \\ 
d-rd%
\end{pmatrix}%
}{%
\begin{pmatrix}
d \\ 
rd%
\end{pmatrix}%
}\approx r^{t}\text{ for big }d
\end{equation*}
\end{enumerate}

As a consequence of the latter, we conclude that even if an optimal data set
is used and the restrictions imposed by the use of term orders are overcome,
the reverse engineering problem remains unfeasible, unless experimentally
impracticable amounts of data are available.

Finally, we comment on one scenario identified in \cite{MR2265002}.
Specifically, in Conclusion 4(a), \cite{MR2265002} makes the assumption that
the wiring diagram of each of the underlying functions is known, i.e. the
variables that actually affect the function $f$ are known. Under this
assumption, let $k$ be the number of variables affecting $f.$ If one could
perform specific experiments such that for all possible values that the $k$
variables can take the response of the network is measured, the function $f$
would be uniquely determined. In this situation, reverse engineering $f$
wouldn't imply making any choices among possible solutions. This raises the
question of how many measurements are needed and how big this data set would
be in proportion to the size $q^{n}$ of the space $\mathbf{F}_{q}^{n}$ of
all possible states the network can theoretically display. The number of
measurements needed is $q^{k}$ and therefore the proportion is equal to%
\begin{equation*}
\frac{q^{k}}{q^{n}}=\frac{1}{q^{n-k}}
\end{equation*}%
If $k$ is small compared to $n$ (which is generally assumed by \cite%
{MR2265002}), then the proportion would be conveniently small. In other
words, in \textit{relative} terms, it is worth performing the $q^{k}$
specific experiments. However, performing $q^{k}$ measurements might still
be beyond experimental feasibility.

\section{Acknowledgements}

We would like to thank Dr. Michael Shapiro for helpful comments and Dr.
Winfried Just for an informative and stimulating e-mail exchange. We are
grateful to Dr. Karen Duca and Dr. David Thorley-Lawson for their support.
We also would like to express our gratitude to Jill Roughan and Dr. Karen
Duca for proofreading the manuscript.

\section{Appendix}

\subsection{Examples of vector spaces in general position and the
codimension condition}

\begin{example}
Let $n=2$, $q=2$ and consider the vector space $F_{2}(\mathbf{F}_{2})$ an
its basis\linebreak $(g_{22\alpha })_{\alpha \in
M_{2}^{2}}=(x_{1}x_{2},x_{1},x_{2},1)$ ordered according to the
lexicographic order with $x_{1}>x_{2}.$ Furthermore let $%
U:=span(x_{1}x_{2}+x_{1}+x_{2}+1).$ The basis vector $%
u_{1}:=x_{1}x_{2}+x_{1}+x_{2}+1$ has the coordinates $(1,1,1,1)^{t}$ with
respect to the basis $(g_{22\alpha })_{\alpha \in M_{2}^{2}}$. Therefore, $U$
is in general position with respect to $(g_{22\alpha })_{\alpha \in
M_{2}^{2}}.$ It is easy to verify%
\begin{eqnarray*}
\left\vert V(\left\langle \varphi ^{-1}(u_{1})\right\rangle )\right\vert
&=&\left\vert \{(x,y)\in \mathbf{F}_{2}^{2}\text{ }|\text{ }xy+x+y+1=0\text{
mod }2\}\right\vert \\
&=&\left\vert \{(0,1),(1,0),(1,1)\}\right\vert =3 \\
&=&2^{2}-1=codim(U)
\end{eqnarray*}%
As a consequence, the set $X:=\{(0,1),(1,0),(1,1)\}$ constitutes an optimal
data set to reverse engineer any function $f\in F_{2}(\mathbf{F}_{2})$
displaying no more than $3$ terms. If the term-order-free reverse
engineering method is used, the probability of successfully retrieving a
nonzero function displaying $1$ term would be%
\begin{equation*}
P=\frac{%
\begin{pmatrix}
2^{2}-1 \\ 
2^{2}-3%
\end{pmatrix}%
}{%
\begin{pmatrix}
2^{2} \\ 
3%
\end{pmatrix}%
}=\frac{%
\begin{pmatrix}
3 \\ 
1%
\end{pmatrix}%
}{%
\begin{pmatrix}
4 \\ 
3%
\end{pmatrix}%
}=\frac{3}{4}=0.75
\end{equation*}%
For a function displaying $2$ terms $P=0.5$ and $3$ terms $P=0.25.$
\end{example}

\begin{example}
Let $n=2$, $q=3$ and consider the vector space $F_{2}(\mathbf{F}_{3})$ an
its basis\linebreak $(g_{23\alpha })_{\alpha \in
M_{3}^{2}}=(x_{1}^{2}x_{2}^{2},x_{1}^{2}x_{2},x_{1}x_{2}^{2},x_{1}^{2},x_{1}x_{2},x_{2}^{2},x_{1},x_{2},1) 
$ ordered according to a total degree term order with $x_{1}>x_{2}.$
Furthermore let $U$ be the $8$-dimensional subspace of $F_{2}(\mathbf{F}%
_{3}) $ generated by%
\begin{equation*}
U:=span(x_{1}^{2}x_{2}^{2}+x_{1}^{2}x_{2},x_{1}^{2}x_{2}+x_{1}x_{2}^{2},x_{1}x_{2}^{2}+x_{1}^{2},x_{1}^{2}+x_{1}x_{2},x_{1}x_{2}+x_{2}^{2},x_{2}^{2}+x_{1},x_{1}+x_{2},x_{2}+1)
\end{equation*}%
The coordinate vectors of the generating vectors are%
\begin{equation*}
\begin{array}{c}
\widehat{u}_{1}:=(1,1,0,...,0)^{t} \\ 
\widehat{u}_{2}:=(0,1,1,0,...,0)^{t} \\ 
\vdots \\ 
\widehat{u}_{8}:=(0,...,0,1,1)^{t}%
\end{array}%
\end{equation*}%
By calculating the determinant of the matrices%
\begin{equation*}
A_{j}:=%
\begin{pmatrix}
\widehat{u}_{1}^{t} \\ 
\widehat{u}_{2}^{t} \\ 
\vdots \\ 
\widehat{u}_{8}^{t} \\ 
e_{j}^{t}%
\end{pmatrix}%
,\text{ }j=1,...,9
\end{equation*}%
(where $e_{j}$ is the $j$th canonical unit vector of $\mathbf{F}_{3}^{9}$),
one can easily show that $U$ is in general position with respect to $%
(g_{23\alpha })_{\alpha \in M_{3}^{2}}.$ To determine the set $%
V(\left\langle \varphi ^{-1}(u_{1}),...,\varphi ^{-1}(u_{8})\right\rangle ),$
we start solving the three last equations given by%
\begin{equation*}
\begin{array}{c}
x_{2}^{2}+x_{1}=0 \\ 
x_{1}+x_{2}=0 \\ 
x_{2}+1=0%
\end{array}%
\Leftrightarrow 
\begin{array}{c}
x_{2}^{2}=-1 \\ 
x_{1}=1 \\ 
x_{2}=-1%
\end{array}%
\end{equation*}%
This system of equations has no solution in the set $\mathbf{F}_{3}^{2}.$
Therefore%
\begin{equation*}
V(\left\langle \varphi ^{-1}(u_{1}),...,\varphi ^{-1}(u_{8})\right\rangle
)=\emptyset
\end{equation*}%
Consequently, $U$ does not satisfy the codimension condition and thus does
not yield an optimal data set.
\end{example}

\subsection{Existence of vector subspaces in general position}

The proof is easy but quite technical. The basic idea of the proof is to
treat the problem over the real numbers and then construct a solution over
finite fields based on the existence of a solution over the real numbers.
This last step takes advantage of the density of the rational numbers in the
set of real numbers.

We recall the definition of general position for vector spaces over a finite
field:

\begin{definition}
Let $W$ be a finite dimensional vector space over a finite field $\mathbf{F}%
_{q}$ with\linebreak $\dim (W)=d>0.$ Furthermore, let $\left(
w_{1},...,w_{d}\right) $ be a fixed basis of $W$ and $s\in 
%TCIMACRO{\U{2115} }%
%BeginExpansion
\mathbb{N}
%EndExpansion
$ a natural number with $s<d.$ A vector subspace $U\subset W$ with $\dim
(U)=s$ is said to be in \emph{general position} with respect to the basis $%
\left( w_{1},...,w_{d}\right) $ if for any basis $(v_{1},...,v_{s})$ of $U$
and any injective mapping%
\begin{equation*}
\pi :\{1,...,(d-s)\}\rightarrow \{1,...,d\}
\end{equation*}%
the vectors%
\begin{equation}
v_{1},...,v_{s},w_{\pi (1)},...,w_{\pi (d-s)}  \label{LinIndepCond}
\end{equation}%
are linearly independent.
\end{definition}

It can be easily shown that if the linear independence condition (\ref%
{LinIndepCond}) holds for one basis of $U$, it holds for every other basis
of $U.$

Now we will construct an $s$-dimensional subspace $U\subset W$ in general
position with respect to a given basis of $W$, where $s$ is an arbitrary
natural number with $s<d.$ For this purpose we will find the coordinates
with respect to the basis $\left( w_{1},...,w_{d}\right) $ of a basis of $U.$
We denote the sought coordinates as follows%
\begin{equation*}
\vec{\xi}_{1}=%
\begin{pmatrix}
x_{1} \\ 
\vdots \\ 
x_{d}%
\end{pmatrix}%
,\vec{\xi}_{2}=%
\begin{pmatrix}
x_{d+1} \\ 
\vdots \\ 
x_{2d}%
\end{pmatrix}%
,\cdots ,\vec{\xi}_{s}=%
\begin{pmatrix}
x_{(s-1)d+1} \\ 
\vdots \\ 
x_{sd}%
\end{pmatrix}%
\end{equation*}%
The next step is to count all different injective mappings $\pi
:\{1,...,(d-s)\}\rightarrow \{1,...,d\}$ as $\pi _{1},...,\pi _{N}.$ For
each $\pi _{i}$ we consider the coordinate vectors $\vec{\xi}_{1},...,\vec{%
\xi}_{s},\vec{e}_{\pi _{i}(1)},...,\vec{e}_{\pi _{i}(d-s)}$ with respect to
the basis $\left( w_{1},...,w_{d}\right) ,$ where $\vec{e}_{j}$ is the $j$th
canonical unit vector of $\mathbf{F}_{q}^{d}.$ Now, for $i=1,...,N$ we
define the determinant functions%
\begin{eqnarray*}
D_{\pi _{i}} &:&%
%TCIMACRO{\U{211d} }%
%BeginExpansion
\mathbb{R}
%EndExpansion
^{sd}\rightarrow 
%TCIMACRO{\U{211d} }%
%BeginExpansion
\mathbb{R}
%EndExpansion
\\
\vec{x} &\mapsto &\left\vert \vec{\xi}_{1},...,\vec{\xi}_{s},\vec{e}_{\pi
_{i}(1)},...,\vec{e}_{\pi _{i}(d-s)}\right\vert
\end{eqnarray*}%
where $\vec{e}_{j}$ is seen as the $j$th canonical unit vector of $%
%TCIMACRO{\U{211d} }%
%BeginExpansion
\mathbb{R}
%EndExpansion
^{d}.$ The linear independence condition (\ref{LinIndepCond}) is equivalent
to%
\begin{equation*}
D_{\pi _{i}}(\vec{x})\neq 0
\end{equation*}%
Due to the structure of $\left( \vec{\xi}_{1},...,\vec{\xi}_{s},\vec{e}_{\pi
_{i}(1)},...,\vec{e}_{\pi _{i}(d-s)}\right) $ and by the Leibniz determinant
formula we know that $D_{\pi _{i}}$ are nonzero polynomial functions in the
variables $x_{1},...,x_{sd}$ and therefore nonzero analytic functions in $%
%TCIMACRO{\U{211d} }%
%BeginExpansion
\mathbb{R}
%EndExpansion
^{sd}$ with infinite radius of convergence, (in particular, continuous
functions). Consequently, no $D_{\pi _{i}}$ can be identical to zero on any
open subset of $%
%TCIMACRO{\U{211d} }%
%BeginExpansion
\mathbb{R}
%EndExpansion
^{sd}.$ By the continuity of $D_{\pi _{1}}$ we know that there is a
non-empty open subset $O_{1}\subseteq 
%TCIMACRO{\U{211d} }%
%BeginExpansion
\mathbb{R}
%EndExpansion
^{sd}$ such that $D_{\pi _{1}}|_{O_{1}}\neq 0.$ Using the same argument we
know that there is a non-empty set $O_{2}\subseteq O_{1}$ open in $%
%TCIMACRO{\U{211d} }%
%BeginExpansion
\mathbb{R}
%EndExpansion
^{sd}$ such that $D_{\pi _{2}}|_{O_{2}}\neq 0.$ After applying this argument 
$N$ times we identify a non-empty open subset $O_{N}\subseteq 
%TCIMACRO{\U{211d} }%
%BeginExpansion
\mathbb{R}
%EndExpansion
^{sd}$ such that $D_{\pi _{i}}|_{O_{N}}\neq 0$ $\forall $ $i\in \{1,...,N\}.$
Since the set $%
%TCIMACRO{\U{211a} }%
%BeginExpansion
\mathbb{Q}
%EndExpansion
^{sd}$ is a dense subset of $%
%TCIMACRO{\U{211d} }%
%BeginExpansion
\mathbb{R}
%EndExpansion
^{sd},$ there is a point $\vec{y}\in O_{N}$ with rational entries, i.e. $%
y_{l}\in 
%TCIMACRO{\U{211a} }%
%BeginExpansion
\mathbb{Q}
%EndExpansion
$ $\forall $ $l\in \{1,...,sd\}.$ Let%
\begin{equation*}
\vec{y}=(\frac{a_{1}}{b_{1}},...,\frac{a_{sd}}{b_{sd}})^{t}
\end{equation*}%
and $c:=\prod_{k=1}^{sd}b_{k}.$ Since $\vec{y}\in O_{N}$, we know $D_{\pi
_{i}}(\vec{y})\neq 0$ $\forall $ $i\in \{1,...,N\}$. By the rules of
determinants we also know%
\begin{equation*}
D_{\pi _{i}}(c\vec{y})\neq 0\text{ }\forall \text{ }i\in \{1,...,N\}
\end{equation*}%
Moreover, $c\vec{y}$ has integer entries, i.e. $cy_{l}\in 
%TCIMACRO{\U{2124} }%
%BeginExpansion
\mathbb{Z}
%EndExpansion
$ $\forall $ $l\in \{1,...,sd\}.$ For a sufficiently large prime number $p$,
the entries $cy_{l}$ can be seen as elements of the finite field $\mathbf{F}%
_{p}$ of integers modulo $p.$ Therefore, the values $cy_{l}\in \mathbf{F}%
_{p},$ $l=1,...,sd$ can be used as the coordinates with respect to the basis 
$\left( w_{1},...,w_{d}\right) $ of a basis for an $s$-dimensional subspace $%
U\subset W$ in general position with respect to the basis $\left(
w_{1},...,w_{d}\right) $ of $W,$ a vector space over the finite field $%
\mathbf{F}_{p}.$%
%TCIMACRO{\TeXButton{End Proof}{\endproof}}%
%BeginExpansion
\endproof%
%EndExpansion

\subsection{The term-order-free reverse engineering algorithm}

The input of the term-order-free reverse engineering algorithm is a set $%
X\subseteq $ $\mathbf{F}_{q}^{n}$ of $m\leq q^{n}$ different data points, a
list of $m$ interpolation conditions%
\begin{equation*}
\vec{x}_{i}\mapsto b_{i},\text{ }x_{i}\in X
\end{equation*}%
and a linear order $>$ for the elements of the basis $(g_{nq\alpha
})_{\alpha \in M_{q}^{n}}$ of $F_{n}(\mathbf{F}_{q}),$ (i.e. the elements of
the basis are ordered decreasingly according to $>$ ). The steps of the
algorithm are as follows:

\begin{enumerate}
\item Calculate the entries of the matrix%
\begin{equation*}
A:=(\Phi _{\vec{X}}(g_{nq\alpha }))_{\alpha \in M_{q}^{n}}\in M(m\times
q^{n};\mathbf{F}_{q})
\end{equation*}%
representing the evaluation epimorphism $\Phi _{\vec{X}}$ of the tuple $\vec{%
X}$ with respect to the basis $(g_{nq\alpha })_{\alpha \in M_{q}^{n}}$ of $%
F_{n}(\mathbf{F}_{q})$ and the canonical basis of $\mathbf{F}_{q}^{m}.$

\item Calculate a basis $\vec{y}_{1},...,\vec{y}_{s}\in \mathbf{F}_{q}^{d}$
of $\ker (A).$

\item Extend the basis $\vec{y}_{1},...,\vec{y}_{s}$ of $\ker (A)$ to a
basis $(\vec{y}_{1},...,\vec{y}_{s},\vec{y}_{s+1},...,\vec{y}_{d})$ of $%
\mathbf{F}_{q}^{d}$ using the standard orthonormalization procedure. (See
Section 4 of \cite{MR???????}).

\item Define a generalized inner product $\left\langle .,.\right\rangle :%
\mathbf{F}_{q}^{d}\rightarrow \mathbf{F}_{q}$ by setting%
\begin{equation*}
\left\langle \vec{y}_{i},\vec{y}_{j}\right\rangle :=\delta _{ij}\text{ }%
\forall \text{ }i,j\in \{1,...,d\}
\end{equation*}%
and calculate the entries of the matrix $S$ defined by%
\begin{equation*}
S_{ij}:=\left\langle \vec{e}_{i},\vec{e}_{j}\right\rangle ,\text{ }i,j\in
\{1,...,q^{n}\}
\end{equation*}%
where $\vec{e}_{j}$ is the $j$th canonical unit vector of $\mathbf{F}%
_{q}^{d}.$

\item The coordinate vector with respect to the basis $(g_{nq\alpha
})_{\alpha \in M_{q}^{n}}$ of the output function is obtained by solving the
following system of inhomogeneous linear equations%
\begin{eqnarray*}
A\vec{z} &=&\vec{b} \\
\vec{y}_{i}^{t}S\vec{z} &=&0,\text{ }i=1,...,s
\end{eqnarray*}
\end{enumerate}

The steps described above represent an intelligible description of the
algorithm and are not optimized for an actual computational implementation.

\bibliographystyle{authordate1}
\bibliography{MathRef}

\end{document}